\def\be{\begin{equation}}
\def\ee{\end{equation}}
\def\ba{\begin{array}{c}}
\def\ea{\end{array}}
\newcommand{\bea}{\begin{eqnarray}}
\newcommand{\eea}{\end{eqnarray}}
\newcommand{\pkt}{\!\!\succ\,\,}
\newcommand{\kt}{\rangle}
\newcommand{\br}{\langle}
\newtheorem{thm}{Theorem}
\newtheorem{lemma}[thm]{Lemma}
\newenvironment{proof}{\noindent
 {\bf Proof.}}{\hfill$\square$\vspace{3mm}\endtrivlist}
\begin{document}

\begin{center}

{\Large \bf

Quantum phase transitions in nonhermitian harmonic oscillator

%
%
%

}

\vspace{0.8cm}

  {\bf Miloslav Znojil}$^{1, 2}$

%
%
%
%

\end{center}

\vspace{10mm}

 $^{1}$ {Department of Physics, Faculty of
Science, University of Hradec Kr\'{a}lov\'{e}, Rokitansk\'{e}ho 62,
50003 Hradec Kr\'{a}lov\'{e},
 Czech Republic}

 $^{2}$ {The Czech Academy of Sciences,
 Nuclear Physics Institute,
 Hlavn\'{\i} 130,
250 68 \v{R}e\v{z}, Czech Republic, {e-mail: znojil@ujf.cas.cz}}

\newpage

\section*{Abstract}

The Stone theorem requires that in a physical Hilbert space
${\cal H}$ the time-evolution of a stable quantum system is unitary if and
only if the corresponding Hamiltonian $H$ is self-adjoint.
Sometimes, a simpler picture of the evolution may be constructed in
a manifestly unphysical Hilbert space ${\cal K}$ in which $H$ is
nonhermitian but ${\cal PT}-$symmetric. In applications,
unfortunately, one only rarely succeeds in circumventing the key
technical obstacle which lies in the necessary reconstruction of the
physical Hilbert space ${\cal H}$. For a ${\cal PT}-$symmetric
version of the spiked harmonic oscillator we show that in the
dynamical regime of the unavoided level crossings such a
reconstruction of ${\cal H}$ becomes feasible and, moreover,
obtainable by non-numerical means. The general form of such a
reconstruction of ${\cal H}$ enables one to render every exceptional
unavoided-crossing point tractable as a genuine, phenomenologically
most appealing quantum-phase-transition instant.

\newpage

\section*{Introduction}

In the Carroll's book about Alice's adventures
one reads that
the ``Cheshire Cat appears in a tree'',
and then he ``disappears but
his grin remains behind to float on its own in the air''
\cite{Alice}.
In a fairly close parallel to the Carroll's story
(and to its occasional, time-to-time use in physics  \cite{Aliceb,Alicec}),
the appearance, more than twenty years ago \cite{BB,DB}, of
the imaginary cubic potential $V(x)={\rm i}x^3$
resembled
the Cheshire Cat of
quantum theory.
The study of the model
directed the community of quantum physicists
towards the wide
acceptance of the concept of parity-times-time-reversal symmetry
(${\cal PT}-$symmetry) in
quantum mechanics of unitary systems \cite{Carl}.
In this new formulation of the theory a simpler picture of the
evolution
may sometimes be constructed \cite{ali}.

In a continued parallel, ``the grin'' (i.e., the inspiring idea of
${\cal PT}-$symmetry) is still ``floating in the air'' at present
\cite{Christodoulides,Carlbook}. At the same time, the imaginary
cubic potential itself has repeatedly been shown to disappear from
the scene of physics because ``there is no quantum-mechanical
Hamiltonian associated with it'' \cite{2012}. An evidence is now
available that many nonhermitian, Cheshire-Cat-resembling models of
dynamics exhibit certain ``unexpected wild properties''
\cite{Viola}. Many people now believe that at least some of the
similar benchmark models ``are not equivalent to Hermitian models,
but that they rather form a separate model class with purely real
spectra'' \cite{Uweho}.

In our present paper we intend to weaken such a wave of scepticism.
We will show that there exist non-trivial nonhermitian quantum
systems living in infinite-dimensional Hilbert spaces in which,
after an appropriate formulation of the theory, even an innovative,
quantum-phase-transition-opening ``wild'' behavior can still be
given an entirely conventional, unitary-evolution interpretation and
explanation compatible with the dictum of standard textbooks.

The purpose will be served by the nonhermitian but ${\cal
PT}-$symmetric toy model (with the real spectrum) represented by the
ordinary differential Schr\"{o}dinger equation
 \be
 \left (-\,\frac{d^2}{dr^2} + \frac{G}{r^2}+r^2
  \right )
 \, \varphi_{}(r) =
 E_{} \, \varphi_{}(r)\,.
 \label{SEnobs}
 \ee
In conventional textbooks \cite{Messiah}
the model is used to describe the radial motion of a
particle in a
$D-$dimensional harmonic oscillator well
$V(\vec{r})=|\vec{r}|^2$.
In these textbooks
the authors also add a comment that
even when $G > -1/4$ is negative,
the quantum system remains stable,
in a remarkable contrast to its unstable classical
analogue.

The latter remarks do not exhaust the list of the remarkable features of model
(\ref{SEnobs}).
In 1999, in
a way inspired by Bender and Boettcher \cite{BB} we showed,
in Ref.~\cite{ptho},
that
under the same constraint $G > -1/4$
the model remains stable
even when it ceases to be Hermitian.
We proved, in particular, that
the complex shift
of the line of coordinates
 \be
 r\ \to \ r(x)=x-{\rm i} c \ \in \ \mathbb{C}\,,
 \ \ \ \ c>0\,,\ \ \
 x \in (-\infty,\infty)\,
 \label{shift}
 \ee
in the same ordinary differential equation (\ref{SEnobs}) still
keeps the energy spectrum real, discrete and bounded from below. The
regularization of the centrifugal-type singularity in the resulting
nonhermitian but ${\cal PT}-$symmetric (i.e.,
parity-times-time-reversal-symmetric \cite{Carl,cartoon})
Hamiltonian
 \be
 H^{(\alpha)}=
 -\,\frac{d^2}{dx^2} + (x-ic)^2
 + \frac{G}{(x-ic)^2}\,,
  \ \ \ \ \alpha=\sqrt{G+1/4}>0\,,
  \ \ \ \ c>0\,,
  \ \ \ \
 x \in (-\infty,\infty)
 \label{Halfa}
 \ee
made the model eligible
as an unusual but still exactly solvable example
in supersymmetric quantum mechanics \cite{susy}.

Mathematically, our Hamiltonian operator $H^{(\alpha)}$ is defined
in Hilbert space ${\cal K}=L^2(\mathbb{R})$ of square-integrable
functions of the new real variable $x$. In Ref.~\cite{ptho} it has
been shown that in spite of the manifest nonhermiticity of
Hamiltonian (\ref{Halfa}) in ${\cal K}$, its spectrum is all real
and defined, in terms of two quantum numbers, by compact formula
 \be
 E=E_{n}^{(Q)}=4n+2 - 2 Q \alpha\,,
 \ \ \ \ \ \ Q = \pm 1\,,
 \ \ \ \ \  n = 0, 1,
2, \ldots
 \,.
 \label{strima}
 \ee
As functions of the coupling
$G$
of the regularized centrifugal-like spike
these eigenvalues are sampled in Fig.~\ref{reone}.

\begin{figure}[h]                     
\begin{center}                         
\epsfig{file=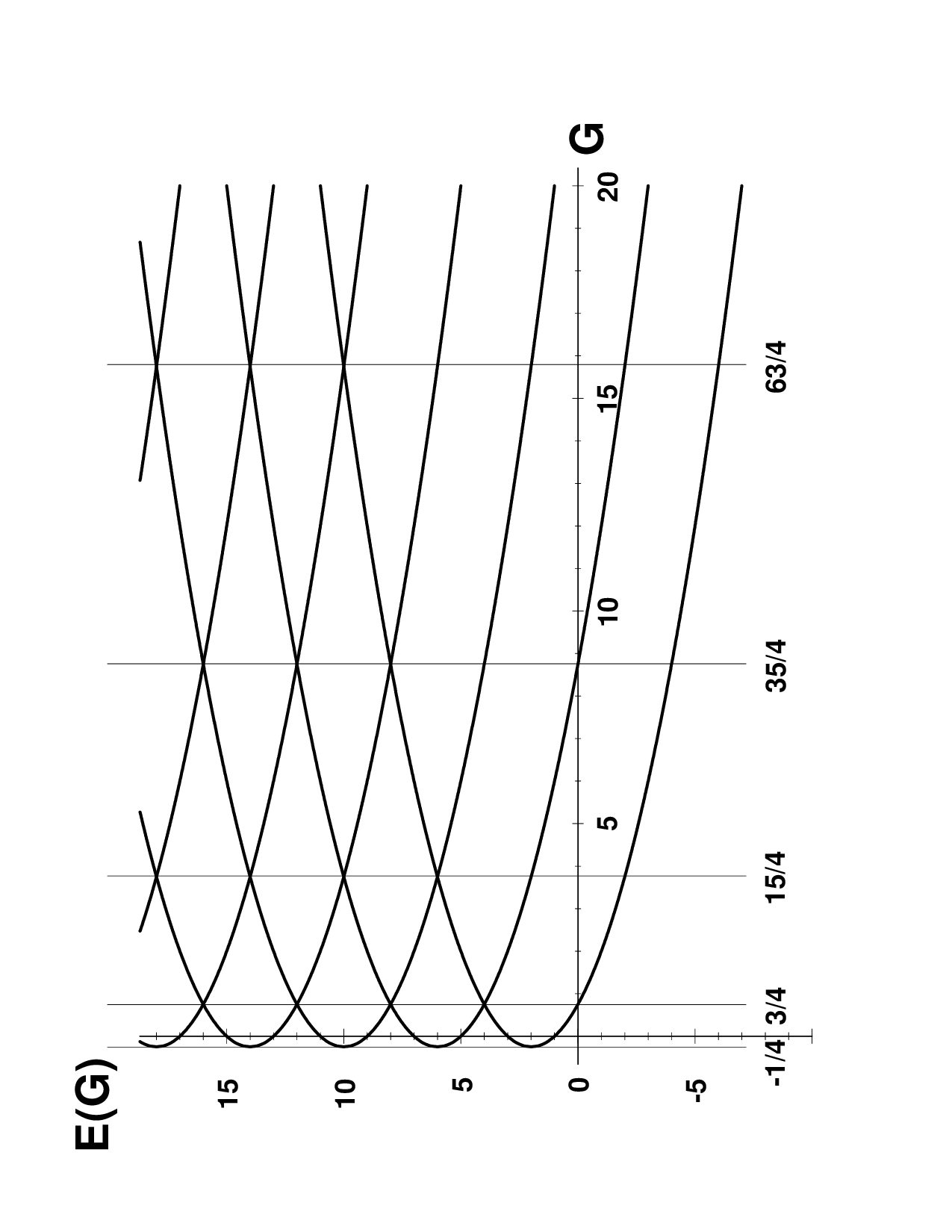,angle=270,width=0.72\textwidth}
\end{center}                         
\vspace{-2mm} \caption{$G-$dependence of spectrum
of ${\cal PT}-$symmetric
harmonic oscillator (\ref{Halfa}).
Vertical lines mark the
exceptional-point values of coupling $G^{(EP)}_\alpha=\alpha^2-1/4$
at $\alpha=\alpha^{(EP)}=0,1,2,3$ and $4$.
 \label{reone}}
\end{figure}

In our present paper a  long missing
constructive
probabilistic interpretation
of such an exactly solvable quantum model
will be presented in a restriction to
the most interesting dynamical regimes
which are not too far from the instants of phase transitions
called exceptional points (EPs, \cite{Kato}).

\section*{Physics behind ${\cal PT}-$symmetric harmonic oscillator}

At the time of the publication of Ref.~\cite{ptho} in 1999, a
consistent physical unitary-evolution interpretation of similar
nonhermitian Hamiltonians has not been available yet. During the
first years of the new millennium people still preferred the
conventional phenomenological treatment of similar models, based on
the widely known Feshbach's effective-Hamiltonian philosophy
\cite{Feshbach,Nimrod}. Only step by step it has been clarified that
one has to distinguish, very strictly, between such a traditional,
``manifestly nonhermitian'' approach (in which the systems are, in
general, resonant or dissipative, and in which the
phenomenologically meaningful spectra of energies may be, and are,
in general, complex) and its ``hiddenly Hermitian'' alternative as
proposed by Bender with Boettcher (in \cite{BB} they insisted on the
strict reality and stable bound-state physical meaning of the energy
spectra).

In our present paper we will only pay attention to the latter branch
of the theory. Readers interested in the current status of the
former, wider branch of research may find its representative sample,
e.g., in the very recent edited book \cite{Christodoulides}. In
contrast, several extensive introductions into the latter, unitary
quantum theory using nonhermitian Hamiltonians with real spectra may
be found reviewed, e.g., in the pair of books \cite{Carlbook,book}.

In an application of the latter,
unitary-evolution approach to our present model (\ref{Halfa})
it is necessary to point out, first of all,
that the underlying, user-friendly Hilbert space ${\cal K}=L^2(\mathbb{R})$
loses the status of the physical space of states
with the conventional probabilistic interpretation.
For this reason,
one has to emphasize that also the variable $x$ still
{\em cannot\,} carry
the physical meaning of an observable of a particle position \cite{Batal}.
In this setting, the intuitively appealing concept of
$\,{\cal PT}-$symmetry must be assigned its
amended mathematical
meaning of a Krein-space-based
${\cal P}-$pseudo-Hermiticity of the Hamiltonian \cite{ali,Langer,AKbook}.

In a retrospective it is possible to say that several years were
needed for an ultimate correct reformulation of quantum mechanics in
which one comes to the conclusion that even the manifestly
nonhermitian Hamitonians with real spectra may still generate a
stable and unitary evolution. Incidentally, during the step-by-step
discoveries of the ultimate consistent formulation of the theory
(see, e.g., \cite{Carl,ali}) people were also rediscovering the
applicability of an older knowledge of the problem not only in
abstract mathematics \cite{Dieudonne} but also in several pragmatic
reinterpretations of the first principles of quantum mechanics by
physicists \cite{Dyson,Geyer}.


\subsection*{Three Hilbert space formulation of quantum mechanics\label{parappa}}

During the process of understanding of the Bender's and Boettcher's
conjectures \cite{BB} it appeared necessary to replace, first of
all, the mathematically friendly Hilbert space ${\cal K}$ by its
unitarity-compatible alternative (in a way recommended in
\cite{SIGMA} we will denote it by
dedicated symbol ${\cal H}$ in what follows). These two Hilbert
spaces differ just by na amendment of the inner product
$(\bullet,\bullet)$ \cite{Geyer}. In the standard Dirac's bra-ket
notation we may write
 \be
 (\psi_a,\psi_b)_{\cal K}=\br \psi_a|\psi_b \kt\,,
 \ \ \ \ \
 (\psi_a,\psi_b)_{\cal H}=\br \psi_a|\Theta|\psi_b \kt\,.
 \label{innpro}
 \ee
Here, the {\it ad hoc,},
Hilbert-space-metric operator
$\Theta$
must satisfy several compatibility conditions, thoroughly studied
and listed in \cite{Geyer}.
For our present purposes we only recall that this operator
must be self-adjoint in ${\cal K}$,
$\Theta=\Theta^\dagger$. This, as a consequence, guarantees the unitarity of the
evolution of the system in ${\cal H}$.
Moreover, one must also guarantee that this operator is bounded and invertible,
with bounded inverse \cite{Geyer}. Last but not least,
we need that the use of this Hilbert-space-metric
operator reinstalls the correct probabilistic contents of the
underlying quantum theory, i.e., that the condition $\Theta>0$
of its positive definiteness leads to the
standard norm in ${\cal H}$.

This means
that the evolution
generated by $H$ will
be unitary in  ${\cal H}$ \cite{Stone,[a2]}.
At the same time, this implies
that the metric must be,
by construction, Hamiltonian-dependent,
i.e., such that
 \be
 H^\dagger \,\Theta=\Theta\,H\,.
 \label{dieux}
 \ee
Fortunately, once we factorize
the metric
 \be
 \Theta=\Omega^\dagger\Omega
 \label{fakt}
 \ee
we reveal that all of the above requirements
are compatible, and that
the
operator $\Omega$
maps the ket-vector elements
$|\psi\kt$
of ${\cal K}$ (or, equivalently, of ${\cal H}$)
on the new, ``curly'' kets
which span another, third Hilbert space denoted as ${\cal L}$,
 \be
 |\psi\pkt = \Omega\,|\psi\kt\,,
 \ \ \ \ \  |\psi\kt \, \in \,{\cal H}\,,
 \ \ \ \ \ |\psi\pkt \, \in \,{\cal L}\,.
 \label{tato}
 \ee
This construction implies the equivalence between the
following two inner products,
 \be
 (\psi_a,\psi_b)_{\cal H}=(\psi_a,\psi_b)_{\cal L}\,.
 \ee
We may conclude that the quantum system
in question may be represented
by the Hamiltonian $H$ acting
in Hilbert space ${\cal H}$
or, equivalently,
by the Hamiltonian
 \be
 \mathfrak{h}=\Omega\,H\,\Omega^{-1}
 \ee
defined in Hilbert space ${\cal L}$. In this framework we may say
that $\mathfrak{h}$ is self-adjoint in ${\cal L}$ while $H$ is
self-adjoint in ${\cal H}$ \cite{[a2]}. The third, manifestly
unphysical Hilbert space ${\cal K}$ is just a mathematically
preferred auxiliary space in which the calculations are all
performed -- this is the reason why $H$ is often (and misleadingly)
called nonhermitian.

In an application of the three-Hilbert-space picture to our harmonic
oscillator Hamiltonian we may also observe that it is nonhermitian
in the manifestly unphysical Hilbert space ${\cal
K}=L^2(\mathbb{R})$. Obviously, unless we specify the physical
Hilbert space ${\cal H}$ [i.e., the metric $\Theta$], the
description of the system remains unfinished, leaving the
information about physics {\em incomplete}. The necessity of the
completion (i.e., of the specification of metric $\Theta$) follows
from the necessity of the standard probabilistic interpretation of
the model. In applications, such a requirement reflects the weakest
point of the whole theory. In fact, for a long time it remained
unnoticed that the present spiked harmonic oscillator model offers
one of the rare opportunities of its consequent and complete
implementation.

\subsection*{Exceptional points}

Besides
an expected confirmation of
complexification
of the whole spectrum
of model (\ref{Halfa})
at negative $\alpha<0$
(the effect widely known under the nickname of a
spontaneous breakdown of ${\cal PT}-$symmetry
\cite{BB,DDT}),
one of the key results of paper \cite{ptho} was
the observation that at the
positive integer values of $\alpha=1,2,\ldots$
the energy levels cross but remain real.
Due to the exact solvability of the model it was
easy to reveal that at all of these values of the parameter
marking the unavoided eigenvalue crossings
were accompanied by the
parallelization and degeneracy of the
related pairs of eigenvectors.
Indeed, for the  bound state wave functions
expressed in terms of Laguerre polynomials,
  \be
\varphi(x) = const. \,(x-ic)^{-Q \alpha+1/2}e^{-(x-ic)^2/2} \
L^{(-Q \alpha)}_n \left [
 (x-ic)^2
 \right ]\,,\ \ \ n=0,1,\ldots
\label{waves}
 \ee
the rigorous proof of the parallelizations
was based on the elementary
identities like
 $$
  L^{(-1)}_{n+1}\left [ (x-ic)^2 \right ]
  =-(x-ic)^2\,L^{(1)}_n\left [ (x-ic)^2 \right ]
  $$
etc.
Using the terminology as introduced by Kato \cite{Kato}
all of the integer values of $\alpha=0,1,\ldots$ may be
called exceptional
points (EPs). At these values, operator $H^{(\alpha)}$ ceases
to be diagonalizable. In the context of quantum mechanics this
has the following important  consequence (see the reasons, e.g., in \cite{ali}).

\begin{lemma} \cite{ptho}
Operator (\ref{Halfa}) may play the role of Hamiltonian
of a unitary quantum system only if $\alpha>0$ and
$\alpha \notin \mathbb{Z}$.
\end{lemma}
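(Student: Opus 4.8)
The plan is to argue by contradiction, starting from the requirement — imposed by the Stone theorem and spelled out in Subsection~\ref{parappa} — that a legitimate physical Hilbert space $\mathcal{H}$ exists only when one can exhibit a bounded, positive and boundedly invertible metric $\Theta=\Omega^\dagger\Omega$ obeying the quasi-Hermiticity relation (\ref{dieux}), equivalently an isospectral map $\mathfrak{h}=\Omega\,H^{(\alpha)}\,\Omega^{-1}$ onto an operator with $\mathfrak{h}=\mathfrak{h}^\dagger$, self-adjoint in $\mathcal{L}$. Two elementary spectral facts about such a self-adjoint $\mathfrak{h}$ do all the work: its spectrum is real, and — the spectrum here being purely discrete by (\ref{strima}) — it is diagonalizable, i.e. it possesses a complete orthonormal eigenbasis. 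Transporting these properties back through the similarity $\Omega$, the operator $H^{(\alpha)}$ is forced to have a real spectrum and a complete eigensystem forming a Riesz basis; in particular $H^{(\alpha)}$ must be diagonalizable.

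I would then dispose of the two forbidden regimes in turn. First, for $G<-1/4$, i.e. when $\alpha=\sqrt{G+1/4}$ fails to be real, the closed-form levels $E=E_n^{(Q)}=4n+2-2Q\alpha$ of (\ref{strima}) acquire nonzero imaginary parts, contradicting the reality just established; since $G=\alpha^2-1/4$ is even, so that $H^{(\alpha)}=H^{(-\alpha)}$ and the spectrum (\ref{strima}) is invariant under $\alpha\to-\alpha$, $Q\to-Q$, it is harmless to restrict to $\alpha\ge 0$ from the outset, and this step removes everything with $\alpha$ not real. Second, for $\alpha\in\{0,1,2,\dots\}$ I would invoke the structural result recalled immediately before the lemma: at each such integer the unavoided level crossings of Fig.~\ref{reone} are accompanied by a genuine confluence of the corresponding pair of Laguerre-type eigenfunctions (\ref{waves}), proved through identities of the type $L^{(-1)}_{n+1}[(x-ic)^2]=-(x-ic)^2\,L^{(1)}_n[(x-ic)^2]$. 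Hence at these $\alpha$ the eigenvectors become linearly dependent, the eigensystem is incomplete, a nontrivial Jordan block is present, and $H^{(\alpha)}$ is \emph{not} diagonalizable — again contradicting the conclusion of the previous paragraph. Combining the two exclusions (the second of which already removes $\alpha=0$) leaves precisely $\alpha>0$ with $\alpha\notin\mathbb{Z}$.

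The only genuinely delicate point is the implication ``similarity to a self-adjoint operator $\Rightarrow$ diagonalizability of $H^{(\alpha)}$'', which is sensitive to the functional-analytic fine print: one must insist that $\Omega$ (equivalently $\Theta$) be bounded with bounded inverse and track the operator domains, since a merely formal, unbounded intertwining would not transport the spectral theorem. In the present exactly solvable setting this is not a real obstruction — the discreteness of (\ref{strima}), the explicit eigenfunctions (\ref{waves}) and the standard quasi-Hermiticity requirements (as formulated, e.g., in \cite{ali}) make ``an admissible $\Theta$ exists $\iff$ $H^{(\alpha)}$ has real spectrum and a Riesz eigenbasis'' the effective working criterion, so the contrapositive is immediate both at the EPs and for $G<-1/4$. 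A fully rigorous account would simply record these boundedness and domain hypotheses explicitly and then quote that equivalence.
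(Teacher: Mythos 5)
Your argument is correct and is essentially the one the paper itself relies on (the lemma is quoted from \cite{ptho}, with the justification supplied by the paragraph preceding it): unitarity forces $H^{(\alpha)}$ to be quasi-Hermitian, hence similar to a self-adjoint operator with real spectrum and a complete (Riesz) eigenbasis, which fails for $G<-1/4$ by complexification of (\ref{strima}) and at integer $\alpha$ by the Laguerre-identity parallelization of eigenvectors that produces Jordan blocks. Your explicit flagging of the boundedness/invertibility requirements on $\Theta$ and $\Omega$ is a welcome precision that the paper leaves implicit.
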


\begin{table}[h]
\caption{EP degeneracies
 }
\vspace{0.5cm}
 \label{owe}
\centering {\small
\begin{tabular}{||c||c|c|c|c|c||}
\hline \hline
  $\alpha=$
  &$0$&${1}$&${2}$&
     $3$&$4$
     \\
     \hline
  &&&&& \vspace{-0.5cm}\\
  \multicolumn{1}{||c||}{${{E_n^{(Q)}}(G)}$}
    &$(G=-{1}/{4})$&$(G={3}/{4})$&$(G={15}/{4})$&
   $(G={35}/{4})$&$(G={63}/{4})$
     \\
 \hline
 %
 \hline
  \vdots&&&&&\\
  -6&&&&&$E_0^{(+)}$\\
  -4&&&&$E_0^{(+)}$&\\
  -2&&&$E_0^{(+)}$&&$E_1^{(+)}$\\
  0&&$E_0^{(+)}$&&$E_1^{(+)}$&\\
  2&$E_0^{(+)}=E_0^{(-)}$&&$E_1^{(+)}$&&$E_2^{(+)}$\\
  4&&$E_0^{(-)}=E_1^{(+)}$&&$E_2^{(+)}$&\\
  6&$E_1^{(+)}=E_1^{(-)}$&&$E_0^{(-)}=E_2^{(+)}$&&$E_3^{(+)}$\\
  8&&$E_1^{(-)}=E_2^{(+)}$&&$E_0^{(-)}=E_3^{(+)}$&\\
  10&$E_2^{(+)}=E_2^{(-)}$&\vdots&$E_1^{(-)}=E_3^{(+)}$
  &\vdots&$E_1^{(-)}=E_4^{(+)}$\\
  \vdots&\vdots&&\vdots&&\vdots
  \\
 \hline \hline
\end{tabular}}
\end{table}

 \noindent
The detailed nature of EP-related degeneracies can vary with
our choice of $\alpha^{(EP)}=\alpha^{(EP)}_K=K$ where $K =0,1,\ldots$
(see Table \ref{owe}).
At these points
the lost possibility of diagonalization of  $H^{(\alpha)}$
can only be replaced by its
canonical representation,
 \be
 H^{{(\alpha)}}_{}\, Q^{{(\alpha)}}_{}
 = Q^{{(\alpha)}}_{}\,{\cal J}^{{(\alpha)}}
 \,,\ \ \ \
 \alpha=\alpha^{{(EP)}}\in \mathbb{Z}\,.
 \label{Crealt}
 \ee
An optimal choice of the infinite-dimensional canonical
representative ${\cal J}^{{(\alpha)}}$ of the EP limit of the
Hamiltonian will be specified below. This choice will enable us to
treat the transition-matrix solutions $Q^{{(\alpha)}}_{}$ of
Eq.~(\ref{Crealt}) as a certain degenerate EP analogue of the set of
eigenvectors forming an unperturbed basis. In such a perspective our
recent experience with the EP-based perturbation theory will find
its new application as a tool of making, finally, the consistent and
constructive physical interpretation of our nonhermitian but unitary
harmonic oscillator quantum model near its EP singularities
complete.

Expectedly \cite{Geyer},
without an additional information about dynamics
there will be infinitely many such completions.
In the related literature, unfortunately, one rarely finds
a sufficiently nontrivial example of such a variability
of options. In our present paper such an example
is provided.


\section*{Results}

\subsection*{Physical Hilbert space of oscillator near the
spontaneous breakdown of ${\cal PT}-$symmetry ($\alpha^{(EP)}=0$)}

Let us initiate our analysis of oscillator
(\ref{Halfa})
in the dynamical regime of the smallest positive
parameters $\alpha$.
Only in the next section we will make the analysis
complete by extending it
to all of the EP neighborhoods of
$\alpha \approx K$ with $K=1,2, \ldots$.

Near the lowermost EP limit $\alpha \to 0^+$ an
inspection of Fig.~\ref{reone} reveals that
the full, infinite-dimensional Hilbert space
may be decomposed into a sequence of
two-dimensional subspaces ${\cal K}^{[2]}_{(n)}$,
 \be
 {\cal K}=\bigoplus_{n=0}^\infty\,{\cal K}^{[2]}_{(n)}\,.
 \label{nespeci}
 \ee
The vanishing$-\alpha$
loss of the diagonalizability of $H^{(\alpha)}$
may be best reflected by the choice of the
canonical representation
matrix ${\cal J}^{{(0)}}$
of Eq.~(\ref{Crealt})
in the following
block-diagonal-matrix
form of a direct sum of Jordan matrices,
 \be
 {\cal J}^{{(0)}}=J^{[2]}(2)\bigoplus
  J^{[2]}(6)\bigoplus  J^{[2]}(10)
 \bigoplus \ldots\,,
 \ \ \ \ \
 J^{[2]}(E)=
 \left (
 \begin{array}{cc}
 E&1\\
 0&E
 \ea
 \right )\,.
 \label{diresu}
 \ee
Having specified this matrix we
have to solve Eq.~(\ref{Crealt}) yielding the
infinite-dimensional
transition matrix.
The columns of this matrix
may then play the role of
an unperturbed basis in ${\cal K}$.
Such a construction generates, finally, a simplified
isospectral
zero-order representation
of our Hamiltonian,
 \be
 \mathfrak{H}^{(0)}(\alpha)=
 \left [ Q^{{(0)}}_{}
 \right ]^{-1}\,H^{{(\alpha)}}_{}\,
  Q^{{(0)}}_{}=
 {\cal J}^{{(0)}} +
 {\rm corrections}
 \,,\ \ \ \
 0<\alpha\ll 1\,.
 \label{reCrealt}
 \ee
In other words
this means that our Hamiltonian will have the
infinite-dimensional
block-diagonal matrix structure,
 $$
 \mathfrak{H}^{(0)}(\alpha)=
 \left(
 \begin{array}{cc|cc|cc}
 2&1&0&0&0&\ldots\\
 0&2&0&0&0&\ldots\\
 \hline
 0&0&6&1&0&\ldots\\
 0&0&0&6&0&\ldots\\
 \hline
 0&0&0&0&10&\ldots\\
 \vdots&\vdots&\vdots&\vdots&\ddots&\ddots
 \ea
 \right )\ +\ {\rm corrections}
 $$
where  the blocks are the two-by-two Jordan
matrices. It is now necessary to specify the
first-order correction term in (\ref{reCrealt}).

In the dynamical regime
of small
and positive $\alpha$s, {\it i.e.},
close to the leftmost EP instability at $\alpha^{(EP)}_0 = 0$
we have to describe the energies
as functions of the coupling constant $G$.
Indeed, once we set $G=G^{(EP)}+ \xi$ and once we rewrite
formula (\ref{strima}) for energies as a function of $\xi$,
we notice the qualitative difference
between the left and right vicinities of $\xi=0$.
As long as only the right, real-energy vicinity
with $\xi >0$ in
$E_n^{(\pm)}= 4n+2 \pm 2\,\sqrt{\xi}$
is of our present interest,
we know that at its EP boundary
the whole spectrum
degenerates
pairwise,
$\lim_{\alpha \to 0}\,E^{(\pm)}_{{n}}\to 4n+2$, $n=0,1,\ldots$.

Our recent experience with
corrections
to non-diagonalizable matrices \cite{admissible}
warns us against a naive expectation that
the correction term in (\ref{reCrealt})
should be of order ${\cal O}(\alpha)$.
An independent version of the same warning
came also from
Ref.~\cite{DDTsusy}
and/or from
inspection of Fig.~\ref{reone}.
We found that
the dominant, leading-order correction
appearing in Eq.~(\ref{reCrealt})
may be written in an apparently counterintuitive
but still remarkably elementary
explicit
form,
 \be
 \mathfrak{H}^{(0)}(\alpha)=
 {\cal J}^{{(0)}} + \xi
 {\cal V}^{{(0)}}
 +
 {\rm higher\ order\ corrections}
  \,,\ \ \ \ \xi={\cal O}(\alpha^2)
 \label{horeC}
 \ee
with elementary block-diagonal matrix of perturbations
 \be
 {\cal V}^{{(0)}}=
 \left [J^{[2]}(0)
 \right ]^T\,\bigoplus
 \left [J^{[2]}(0)
 \right ]^T\,\bigoplus
 \left [J^{[2]}(0)
 \right ]^T\,\bigoplus \ldots
 \,
 \label{thesa}
 \ee
where, the superscript $^T$ marks the matrix transposition.

The main consequence of these formulae is that
in every two-dimensional subspace  ${\cal K}^{[2]}_{(n)}$
we have a block-diagonalized leading-order Hamiltonian
 \be
 \mathfrak{H}^{(0)}(\alpha)
 \approx \mathfrak{H}^{(0)}_0(\alpha) =
 {H}^{[2]}_{(0)}(\xi) \,\bigoplus\,
 {H}^{[2]}_{(1)}(\xi) \,\bigoplus\,\ldots
 \label{r13}
 \ee
where
 \be
  {H}^{[2]}_{(n)}(\xi)=
 J^{[2]}(E_n^{(+)})+\xi\,
 \left [J^{[2]}(0)
 \right ]^T=
 \left (
 \begin{array}{cc}
 E_n^{(+)}&1\\
 \xi &E_n^{(+)}
 \ea
 \right )
 \,,\ \ \ \ \ E_n^{(+)}=\left .E_n^{(+)}\right |_{\alpha=0}
 =4n+2\,.
 \label{aprha}
 \ee
For the latter submatrices we can solve
the related time-independent Schr\"{o}dinger equations
in closed form,
 \be
 H^{[2]}_{(n)}(\xi)\,
  \left (\begin{array}{c}
 1\\
 \eta_\pm
 \ea
 \right )=
 \left (E_n^{(+)}+\eta_\pm
 \right )\, \left (\begin{array}{c}
 1\\
 \eta_\pm
 \ea
 \right )\,,
 \ \ \ \ \eta_\pm=\pm \sqrt{\xi}\,.
 \label{bham13}
 \ee
This has the following consequence.

\begin{lemma}
For approximate two by two matrix Hamiltonians (\ref{aprha})
the unfolding energies are real if and only if
the small parameter $\xi$ is non-negative, $\xi \geq 0$.
\end{lemma}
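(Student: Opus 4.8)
The plan is to read the spectrum of each block directly off the canonical $2\times2$ form (\ref{aprha}) and to reduce the realness question to the status of a single square root. First I would compute the characteristic polynomial of ${H}^{[2]}_{(n)}(\xi)$, namely $\det\!\left({H}^{[2]}_{(n)}(\xi)-\lambda\,\mathrm{id}\right)=(\lambda-E_n^{(+)})^2-\xi$, whose two roots are $\lambda_\pm=E_n^{(+)}\pm\sqrt{\xi}=E_n^{(+)}+\eta_\pm$ with $E_n^{(+)}=4n+2$ real --- exactly the pair already displayed in (\ref{bham13}). Thus the two ``unfolding'' energies of the $n$-th block are $\{E_n^{(+)}+\sqrt{\xi},\,E_n^{(+)}-\sqrt{\xi}\}$, and everything now hinges on whether $\sqrt{\xi}$ is real.

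For the ``if'' direction I would observe that whenever $\xi\ge0$ the number $\sqrt{\xi}$ is a non-negative real, so both $E_n^{(+)}\pm\sqrt{\xi}$ lie in $\mathbb{R}$ --- with coincidence precisely at $\xi=0$, which is the EP recovering the Jordan-block degeneracy of ${\cal J}^{(0)}$. For the ``only if'' direction I would use that in the present regime $\xi={\cal O}(\alpha^2)$ is a real parameter (inherited from real $\alpha$ near $\alpha^{(EP)}_0=0$), so that $\xi<0$ forces $\sqrt{\xi}=\mathrm{i}\sqrt{|\xi|}$ and the two energies form a genuinely complex-conjugate, non-real pair $E_n^{(+)}\pm\mathrm{i}\sqrt{|\xi|}$; hence realness of the spectrum fails for every negative $\xi$. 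Combining the two implications, and noting that the argument is uniform in the block index $n$, yields the asserted equivalence.

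I do not expect a genuine obstacle: the statement is an immediate consequence of the explicit diagonalization already recorded in (\ref{bham13}), and the only point worth stating carefully is that $\xi$ is real, without which the dichotomy $\xi\gtrless0$ would not be exhaustive. If desired, one could append the remark that the same computation simultaneously pins $\xi=0$ as the unique exceptional point of each block, so that the real-spectrum region $\xi\ge0$ is exactly a half-line with the EP as its boundary --- the structural feature that the subsequent construction of an admissible physical metric $\Theta$ will rely on.
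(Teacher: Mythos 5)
Your proof is correct and follows the same route as the paper, which simply reads the eigenvalues $E_n^{(+)}+\eta_\pm$ with $\eta_\pm=\pm\sqrt{\xi}$ off the explicit closed-form solution (\ref{bham13}) and notes that they are real precisely when $\xi\geq 0$. Your added remark that the reality of $\xi$ itself is needed for the dichotomy to be exhaustive is a sensible point of care, but the argument is otherwise identical.
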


 \noindent
At non-negative $\xi$ we have $\xi=\alpha^2$ and $\eta_\pm=\pm
\alpha$ in (\ref{bham13}). The approximate Hamiltonian (\ref{aprha})
may be then made Hermitian along the lines outlined above. Via a
mere redefinition of inner products (\ref{innpro}) our unphysical
but mathematically optimal Hilbert space ${\cal K}^{[2]}_{(n)}$ is
converted into its correct physical alternative ${\cal
H}^{[2]}_{(n)}$.

\begin{lemma}\label{lemma3}
Metric operators $\Theta$
making Hamiltonian (\ref{aprha})
Hermitian (in
${\cal H}^{[2]}_{(n)}$)
read
 \be
 \Theta=\Theta^{[2]}_{(n)}(\alpha,b_n)=
 \left (
 \begin{array}{cc}
 \alpha &b_n\\
 b_n&1/\alpha
 \ea
 \right )\,
 \label{me22}
 \ee
and form a one-parametric family
numbered by a real variable $b_n$ such that
 $|b_n|<1$.
\end{lemma}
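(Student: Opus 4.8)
The plan is to solve the quasi-Hermiticity (Dieudonn\'e) equation~(\ref{dieux}), $H^\dagger\Theta=\Theta H$, directly for the single $2\times 2$ block $H=H^{[2]}_{(n)}(\xi)$ of~(\ref{aprha}) with $\xi=\alpha^2>0$, and afterwards to impose the positivity requirement $\Theta=\Theta^\dagger>0$.

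First I would parametrize a general Hermitian candidate as
 \be
 \Theta=\left(\begin{array}{cc} p & q\\ \bar q & s\ea\right)\,,\qquad p,s\in\R\,,\ \ q\in\mathbb{C}\,.
 \ee
Because $H$ has real entries, $H^\dagger=H^T=\left(\begin{array}{cc}E&\xi\\ 1&E\ea\right)$ with $E=E_n^{(+)}=4n+2$. Writing out both products and comparing the four entries, the $(1,1)$ and $(2,2)$ equations reduce to $\xi\,\bar q=\xi\,q$ and hence force $q$ to be real (here one uses $\xi=\alpha^2\neq 0$), while the $(1,2)$ and $(2,1)$ equations each collapse to the \emph{same} scalar relation $p=\xi\,s=\alpha^2 s$. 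Thus the solution space is the family $\Theta=s\left(\begin{array}{cc}\alpha^2 & b/s\\ b/s & 1\ea\right)$, carrying one irrelevant overall scale $s>0$ together with one genuine real parameter; normalizing by the convention $s=1/\alpha$ (so that $\Theta_{11}=\alpha$) reproduces exactly~(\ref{me22}) with $b_n:=q$.

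Second, to determine the admissible values of $b_n$ I would invoke the elementary criterion that a Hermitian $2\times2$ matrix is positive definite precisely when its trace and its determinant are both positive. For~(\ref{me22}) one has $\mathrm{tr}\,\Theta=\alpha+1/\alpha>0$ automatically (since $\alpha>0$), and $\det\Theta=\alpha\cdot(1/\alpha)-b_n^2=1-b_n^2$, which is positive if and only if $|b_n|<1$. This simultaneously shows that every matrix of the form~(\ref{me22}) with $|b_n|<1$ is a legitimate metric and that, after the chosen normalization, no other one occurs.

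I do not anticipate a serious obstacle: the entire argument is a finite linear-algebra computation. The only two points deserving explicit mention are (i) that the strict inequality $\xi=\alpha^2>0$ is precisely what makes $b_n$ necessarily real and the off-diagonal structure rigid --- for $\xi=0$ the intertwining relation would degenerate, but that is exactly the excluded EP value $\alpha=0$ --- and (ii) that the apparent two-parameter freedom $(s,q)$ is cut down to the single parameter $b_n$ only after fixing the physically immaterial overall normalization of $\Theta$, which is why the family is called one-parametric.
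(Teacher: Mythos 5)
Your proposal is correct and follows essentially the same route as the paper, whose proof merely states that the Dieudonn\'e condition (\ref{dieux}) gives a set of linear equations for the entries of $\Theta$ and that ``an easy algebra leads to the result'' under the Hermiticity and positivity constraints; you have simply carried out that algebra explicitly, including the correct observation that the full solution space is two-parametric and reduces to the one-parameter family (\ref{me22}) only after fixing the overall normalization, with the trace--determinant criterion then yielding $|b_n|<1$.
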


\begin{proof}
The Hermiticity of matrix $H^{[2]}_{(n)}(\xi)$ in the physical
Hilbert space ${\cal H}^{[2]}_{(n)}$ means that this matrix
satisfies condition (\ref{dieux}). This condition (written in ${\cal
K}^{[2]}_{(n)}$) may be perceived as a set of linear equations for
the matrix elements of the unknown matrix $\Theta$. This matrix must
be Hermitian and positive definite \cite{Geyer}. Under these
constraints, an easy algebra leads to the result.
\end{proof}

\begin{thm}\label{thm4}
At small $\alpha$
the infinite-dimensional matrix Hamiltonian
$\mathfrak{H}^{(0)}_0(\alpha)$
of Eq.~(\ref{r13})
becomes Hermitian in the {\it ad hoc\,} physical Hilbert space
 \be
 {\cal H}=\bigoplus_{n=0}^\infty\,{\cal H}^{[2]}_{(n)}
 \label{tenpr}
 \ee
whenever we introduce, in
${\cal K}=\bigoplus_n\,{\cal K}^{[2]}_{(n)}$,
one of the amended, nontrivial inner-product metrics
 \be
 \Theta=\bigoplus_{n=0}^\infty\,\Theta^{[2]}_{(n)}(\alpha,b_n)\,.
 \label{m14}
 \ee
The optional sequence of parameters
$b_n \in (-1,1)$ with  $n=0,1,\ldots$ is arbitrary.
\end{thm}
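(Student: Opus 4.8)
\noindent
The plan is to reduce the infinite-dimensional assertion to the finite-dimensional content of Lemma~\ref{lemma3} by exploiting the block-diagonal structure of $\mathfrak{H}^{(0)}_0(\alpha)$ exhibited in (\ref{r13}). At the physically admissible value $\xi=\alpha^2\ge0$ this operator is the orthogonal (in ${\cal K}$) direct sum $\bigoplus_{n\ge0}H^{[2]}_{(n)}(\xi)$ of the two-by-two blocks (\ref{aprha}), and the candidate (\ref{m14}) is the matching direct sum of the block metrics supplied by Lemma~\ref{lemma3}. I would first record the facts that carry over trivially: each $\Theta^{[2]}_{(n)}(\alpha,b_n)$ is Hermitian with trace $\alpha+1/\alpha>0$ and determinant $1-b_n^2>0$ (here $|b_n|<1$ is used), hence positive definite, so $\Theta=\Theta^{\dagger}>0$ as a direct sum of positive operators; and, writing $\Theta^{[2]}_{(n)}(\alpha,b_n)=\Omega_n^{\dagger}\Omega_n$ as in (\ref{fakt}), Lemma~\ref{lemma3} is precisely the statement that each $\mathfrak{h}_n:=\Omega_nH^{[2]}_{(n)}(\xi)\Omega_n^{-1}$ is a Hermitian $2\times2$ matrix. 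Therefore $\mathfrak{h}:=\bigoplus_n\mathfrak{h}_n$, a direct sum of finite self-adjoint matrices, is self-adjoint in ${\cal K}$ on its natural domain; $\Theta=\Omega^{\dagger}\Omega$ with $\Omega:=\bigoplus_n\Omega_n$; and $\mathfrak{H}^{(0)}_0(\alpha)=\Omega^{-1}\mathfrak{h}\,\Omega$ is similar, via $\Omega$, to the self-adjoint operator $\mathfrak{h}$.

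The step that needs genuine (if only mild) care --- and the only real obstacle --- is to check that the metric (\ref{m14}) is \emph{bounded with bounded inverse}, equivalently that $\Omega$ and $\Omega^{-1}$ are bounded on ${\cal K}$; this is the standard requirement for ${\cal H}$ of (\ref{tenpr}) to be a legitimate physical Hilbert space, carrying a norm equivalent to that of ${\cal K}$, in which $\mathfrak{H}^{(0)}_0(\alpha)$ is self-adjoint. I would diagonalize the $n$-th block,
\[
 \lambda^{(n)}_{\pm}=\frac{1}{2}\left(\alpha+\frac{1}{\alpha}\right)\pm\frac{1}{2}\sqrt{\left(\frac{1}{\alpha}-\alpha\right)^2+4b_n^2}\,,
\]
and observe that, with $\alpha$ fixed, $b_n^2<1$ forces $\lambda^{(n)}_{+}$ below the $n$-independent constant $\frac{1}{2}(\alpha+1/\alpha)+\frac{1}{2}\sqrt{(1/\alpha-\alpha)^2+4}$, so that $\|\Theta\|<\infty$; while $\lambda^{(n)}_{-}=(1-b_n^2)/\lambda^{(n)}_{+}$ is bounded away from $0$ uniformly in $n$ exactly when $\sup_n|b_n|<1$. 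Under this harmless strengthening of the hypothesis $\Theta^{-1}$ is bounded too, and the theorem follows; with only the bare condition $b_n\in(-1,1)$ one still has a positive $\Theta$ satisfying the exact intertwining (\ref{dieux}), and (\ref{innpro}) remains well defined, but ${\cal H}$ need no longer carry a norm equivalent to that of ${\cal K}$. I would finally state the result with whichever of the two conventions the rest of the paper adopts.

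It is worth pointing out, in closing, that the freedom in $\{b_n\}$ is intrinsic rather than an artefact of the construction: the decomposition (\ref{nespeci}) is forced on \emph{every} admissible metric, since an off-diagonal block $\Theta_{mn}$ with $m\ne n$ would have to solve the Sylvester equation $[H^{[2]}_{(m)}(\xi)]^{\dagger}\Theta_{mn}=\Theta_{mn}H^{[2]}_{(n)}(\xi)$, whose only solution is $\Theta_{mn}=0$ because for small $\alpha$ the spectra of distinct blocks are separated by gaps $\approx4\gg2\alpha$. Hence Lemma~\ref{lemma3} already exhibits \emph{all} admissible metrics block by block, and the sequence $\{b_n\}_{n\ge0}$ parametrizes, at this leading order, the full family of physical Hilbert spaces of the model near $\alpha^{(EP)}=0$ --- the concrete, non-trivially variable example of inequivalent completions promised above.
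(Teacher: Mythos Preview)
Your proof follows the same route as the paper's: the block-diagonality of both $\mathfrak{H}^{(0)}_0(\alpha)$ and $\Theta$ reduces the Dieudonn\'e condition (\ref{dieux}) to the two-by-two case already handled by Lemma~\ref{lemma3}. Your treatment is considerably more thorough --- the paper's own proof is a two-line appeal to block-diagonality plus Lemma~\ref{lemma3}, and does not address the boundedness of $\Theta^{-1}$ (your observation that this needs $\sup_n|b_n|<1$ rather than merely $|b_n|<1$ for each $n$) or the Sylvester-equation argument forcing block-diagonality on every admissible metric.
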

\begin{proof}
The infinite-dimensional matrix Hamiltonian (\ref{horeC}) must be
shown compatible with the Dieudonn\'{e}'s Hermiticity condition
(\ref{dieux}), but this follows from the block-diagonality of the
participating infinite-dimensional matrices, and from Lemma
\ref{lemma3}.
\end{proof}

 \noindent
We see that at sufficiently small parameters $\alpha>0$,
our ${\cal PT}-$symmetric harmonic-oscillator
Hamiltonian (\ref{Halfa})
defined in auxiliary, unphysical Hilbert space ${\cal K} =
L_2(-\infty,\infty)$ of Eq.~(\ref{nespeci})
acquires
the status of standard self-adjoint
generator of unitary evolution.
Nevertheless,
different choices of the sequence of parameters
$\{b_n\}$ define
phenomenologically non-equivalent quantum systems.
In any such a system
the observables must be represented
by operators $\Lambda$
which are self-adjoint
in
the
physical Hilbert space ${\cal H}$ of Eq.~(\ref{tenpr}).
Even for the block-diagonal subset
$\Lambda=\bigoplus_n \Lambda^{[2]}_{(n)}$
of observables the general form
of their admissible
submatrices
 \be
 \Lambda^{[2]}_{(n)}=
 \left (
 \begin{array}{cc}
 u &v\\
 y&z
 \ea
 \right )\,
 \label{ne22}
 \ee
remains $b_n-$dependent in general. Indeed, in a parallel to
Eq.~(\ref{dieux}) these submatrices must satisfy the
metric-dependent Hermiticity constraint
 \be
 \left [\Lambda^{[2]}_{(n)}\right ]^\dagger\,
 \Theta^{[2]}_{(n)}(\alpha,b_n)=
 \Theta^{[2]}_{(n)}(\alpha,b_n)\,
 \Lambda^{[2]}_{(n)}\,.
 \label{conns}
 \ee
\begin{lemma}\label{lemma5}
Condition (\ref{conns}) is satisfied if and only if we restrict
 \be
 y=y(b_n)= \alpha^2v+\alpha\,b_n\,(z-u) \,
 \ee
in (\ref{ne22}).
 \end{lemma}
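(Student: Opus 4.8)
The plan is to convert the operator identity (\ref{conns}) into an explicit system of scalar equations for the entries $u,v,y,z$ of (\ref{ne22}), in exactly the same spirit as the proof of Lemma~\ref{lemma3}. Because the adjoint $\dagger$ entering (\ref{conns}) is the one carried by the auxiliary space ${\cal K}^{[2]}_{(n)}$, and because the admissible submatrix (\ref{ne22}) and the metric (\ref{me22}) have real entries, the left-hand side of (\ref{conns}) is simply the matrix product
 \be
 \left [\Lambda^{[2]}_{(n)}\right ]^\dagger\,
 \Theta^{[2]}_{(n)}(\alpha,b_n)=
 \left (
 \begin{array}{cc}
 u & y\\
 v & z
 \ea
 \right )
 \left (
 \begin{array}{cc}
 \alpha & b_n\\
 b_n & 1/\alpha
 \ea
 \right )\,,
 \ee
while the right-hand side is $\Theta^{[2]}_{(n)}(\alpha,b_n)\,\Lambda^{[2]}_{(n)}$ with $\Lambda^{[2]}_{(n)}$ given by (\ref{ne22}).

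First I would multiply out both $2\times 2$ products and compare them entry by entry. The two diagonal positions yield identities valid for every $u,v,y,z$ (this merely reflects the symmetry of $\Theta^{[2]}_{(n)}$, and parallels the trivially satisfied relations met in Lemma~\ref{lemma3}), so they impose no restriction whatsoever. The $(1,2)$ and the $(2,1)$ positions, on the contrary, reduce to one and the same scalar equation
 \be
 \frac{y}{\alpha}=\alpha\,v+b_n\,(z-u)\,,
 \ee
which is precisely the asserted relation $y=y(b_n)=\alpha^2v+\alpha\,b_n\,(z-u)$. Conversely, substituting this value of $y$ back into the two products makes all four entries coincide, so that (\ref{conns}) holds. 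Thus (\ref{conns}) is equivalent to the single stated constraint on $y$, with $u$, $v$ and $z$ remaining free.

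There is essentially no genuine obstacle here; the computation is elementary. The only points that deserve a moment of attention are, first, the bookkeeping of which adjoint enters (\ref{conns}) --- it is the ${\cal K}$-adjoint, hence ordinary transposition for the real matrices at hand --- and, second, the observation that the $(1,2)$ and $(2,1)$ equations coincide rather than over-determine $y$. That they coincide is guaranteed by the Hermiticity and invertibility of $\Theta^{[2]}_{(n)}$, already secured in Lemma~\ref{lemma3} for $\alpha>0$ and $|b_n|<1$; in particular, no additional constraint on $b_n$ is produced at this stage.
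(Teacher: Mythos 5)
Your computation is correct and is exactly the elementary entrywise comparison that the paper leaves implicit (it states Lemma~\ref{lemma5} without a written proof, in the same ``easy algebra'' spirit as its proof of Lemma~\ref{lemma3}): the diagonal entries of $\Lambda^\dagger\Theta$ and $\Theta\Lambda$ agree identically, and the $(1,2)$ and $(2,1)$ entries both reduce to $y/\alpha=\alpha v+b_n(z-u)$, i.e.\ to the stated constraint. Your remark that the ``if and only if'' is being read for real matrix entries (so that $\dagger$ acts as transposition) is the right caveat and is consistent with the paper's conventions.
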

In an elementary check, the latter
construction of observables reproduces
the initial
leading-order
Hamiltonian at $v=1$ and $u=z=0$.
It is also easy to verify that the most popular complementary
observable of charge \cite{Carl} is obtained at
$v=1/\alpha$ and $u=z=b_n=0$.

Marginally, let us add that
once we reparametrize $\alpha=\exp t$ and $b_n=\cos \phi$
in (\ref{me22}) and once we put $\phi=\mu+\nu$
we may also
factorize the metric
(cf. Eq.~(\ref{fakt})) yielding
 \be
 \Omega^{[2]}_{(n)}=
 \left (
 \begin{array}{cc}
 p&a\\
 a&q
 \ea
 \right )\,,\ \ \ \
 p=e^{t/2} \sin \mu\,,
 \ \ \ \
 q=e^{-t/2} \sin \nu\,,
 \ \ \ \
 a=e^{t/2} \cos \mu=e^{-t/2} \cos \nu\,.
 \label{tuje}
 \ee
On these grounds, whenever needed, we may perform transition to the
third Hilbert space ${\cal L}$ using Eq.~(\ref{tato}). Redundant as
this step may seem to be, the work in the latter space is often
recommended in conventional textbooks, mainly for establishing
easier contacts with experimentalists (cf., e.g.,
\cite{Batal,Bishop}).

\subsection*{Physical Hilbert spaces of oscillators near unavoided
level crossings ($\alpha^{(EP)}=1,2,\ldots$)}

In Fig.~\ref{reone}
we notice a significant qualitative difference
between the leftmost EP
at $\alpha=\alpha^{(EP)}_0 = 0$
(to the left of which the spectrum complexifies)
and the remaining EP family of
$\alpha^{(EP)}_K= K$ with $K=1,2,\ldots$
(in the
respective vicinities of which the spectra remain real).
In what follows we intend to show that  at $K \geq 1$
such a qualitative phenomenological difference
is also reflected by the related
mathematics.

First of all we notice that in the limit
$\alpha \to \alpha^{(EP)}_K$
the $K-$plet of
the lowermost energy levels remains
non-degenerate (ND).
In a small vicinity of $\alpha^{(EP)}_K$
the
$K-$dimensional
Hilbert space ${\cal K}^{[K]}_{ND}$
spanned by
the corresponding
wave functions
may be characterized, for this reason, by the unit-matrix
metric of textbooks, $\Theta^{[K]}_{ND}=I$. Hence, this subspace
may be treated as
equivalent to its two physical alternatives,
${\cal K}^{[K]}_{ND}\equiv {\cal H}^{[K]}_{ND}\equiv
{\cal H}^{[K]}_{ND}$.
For this reason the full Hilbert spaces
 \be
 {\cal K}=
 {\cal K}^{[K]}_{ND}\,
 \bigoplus\,{\cal K}^{[2]}_{(0)}
 \bigoplus\,{\cal K}^{[2]}_{(1)}
 \bigoplus\,\ldots\,
 ,\ \ \ \
 {\cal H}=
 {\cal H}^{[K]}_{ND}\,
 \bigoplus\,{\cal H}^{[2]}_{(0)}
 \bigoplus\,{\cal H}^{[2]}_{(1)}
 \bigoplus\,\ldots\,
 \label{redenes}
 \label{denespe}
 \ee
may be, for our present purpose
of the construction of its physics-representing amendment,
reduced to the respective relevant tilded subspaces
 \be
 \widetilde{\cal K}=\widetilde{\cal K}^{[2]}_{(0)}
 \bigoplus\,\widetilde{\cal K}^{[2]}_{(1)}
 \bigoplus\,\widetilde{\cal K}^{[2]}_{(2)}\,\bigoplus\,\ldots\,,
 \ \ \ \
 \widetilde{\cal H}=\widetilde{\cal H}^{[2]}_{(0)}
 \bigoplus\,\widetilde{\cal H}^{[2]}_{(1)}
 \bigoplus\,\widetilde{\cal H}^{[2]}_{(2)}\,\bigoplus\,\ldots
 \,
 .
 \label{ufdenes}
 \ee
In the small left and right vicinities of exceptional points
$\alpha^{(EP)} =K\geq 1$
the reduction of attention
will also involve
the omission, from our considerations,
of the trivial, diagonal-matrix sub-Hamiltonian
$H^{[K]}_{ND}$ such that
 $$
 {\left (H^{[K]}_{ND} \right )}_{jj}=E^{(+)}_j=-2K+2+4j\,,
  \ \ \ \ j=0,1,\ldots,K-1\,.
 $$
In the only relevant (i.e., in our notation,
in the tilded)
part (\ref{ufdenes}) of full spaces
the rest of the spectrum
remains doubly degenerate forming the sequence
sampled in Table \ref{owe},
 $$
 E^{(-)}_n=E^{(+)}_{n+K}=2K+2+4n\,, \ \ \ \ n=0,1,\ldots\,
 $$
This enables us
to establish, in three steps,
several $K>0$ parallels with the preceding
$K=0$ results.
In the first step we introduce the tilded version
of the canonical EP Hamiltonian,
 \be
 \widetilde{\cal J}^{{(K)}}=J^{[2]}(2K+2)\bigoplus
  J^{[2]}(2K+6)
 \bigoplus \ldots\,.
 \label{rwesu}
 \ee
Up to the omission of the first $K$
non-degenerate levels this is a perfect $K>0$ analogue
of the $K=0$ EP Hamiltonian of Eq.~(\ref{diresu}).
In the second step we define the
infinite-dimensional tilded
transition matrices $\widetilde{Q}^{(K)}$
as solutions of a tilded version of Eq.~(\ref{Crealt}).
In the third step, as above, we finally
use these transition matrices
to define the unperturbed basis
(cf. \cite{admissible}).

In the vicinity of $\alpha^{(EP)}_K=K$,
as a result,
the tilded
$K>0$ analogue of
the simplified Hamiltonian of
Eq.~(\ref{horeC}) is obtained,
 \be
  \widetilde{\mathfrak{H}}^{(K)}(\alpha)=
 \left [ \widetilde{Q}^{{(K)}}_{}
 \right ]^{-1}\,\widetilde{H}^{{(\alpha)}}_{}\,
  \widetilde{Q}^{{(K)}}_{}=
 \widetilde{\cal J}^{{(K)}} + \delta^2
 \widetilde{\cal V}^{{(K)}}
 +
 {\rm higher\ order\ corrections}\,.
 \label{horeB}
 \ee
The matrix
of perturbations itself remains the same as above, $
\widetilde{\cal V}^{{(K)}}={\cal V}^{{(0)}} $
[cf. Eq.~(\ref{thesa}) above].
What is, nevertheless, different is the role of
the new small parameter
$\delta=\delta(\alpha)=\alpha-K$.
One of the reasons is that the
unfolded spectrum remains real
at both of its signs. Hence, the
approximate
leading-order tilded Hamiltonian
 \be
 \widetilde{\mathfrak{H}}^{(K)}_0(\alpha) =
 \widetilde{H}^{[2]}_{(0)}[\delta(\alpha)] \,\bigoplus\,
 \widetilde{H}^{[2]}_{(1)}[\delta(\alpha)] \,\bigoplus\,\ldots
 \label{dr13}
 \ee
with
 \be
 \widetilde{H}^{[2]}_{(n)}(\delta)=
 J^{[2]}(E_n^{(-)})+\delta^2\,
 \left [J^{[2]}(0)
 \right ]^T=
 \left (
 \begin{array}{cc}
 E_n^{(-)}&1\\
 \delta^2 &E_n^{(-)}
 \ea
 \right )
 \,,\ \ \ \ E_n^{(-)}=\left .E_n^{(-)}\right |_{\alpha=K}
 =2K+4n+2\,
 \label{caprha}
 \ee
has different spectral properties
determined by the related Schr\"{o}dinger equation
 \be
 \widetilde{H}^{[2]}_{(n)}(\delta)\,
  \left (\begin{array}{c}
 1\\
 \pm \delta
 \ea
 \right )=
 \left (E_n^{(+)}\pm \delta
 \right )\, \left (\begin{array}{c}
 1\\
 \pm \delta
 \ea
 \right )\,.
 \label{bbham13}
 \ee
Still, many of the consequences remain similar.

\begin{lemma}\label{lemma3b}
Metric operators
making Hamiltonian (\ref{caprha})
Hermitian in
$\widetilde{\cal H}^{[2]}_{(n)}$
form a one-parametric family
 \be
 \widetilde{\Theta}^{[2]}_{(n)}(\delta,c_n)=
 \left (
 \begin{array}{cc}
 \delta &c_n\\
 c_n&1/\delta
 \ea
 \right )\,
 \label{ume22}
 \ee
where
$\delta=\delta(\alpha)=\alpha-K \neq 0$ is small,
and where $-1<c_n<1$.
\end{lemma}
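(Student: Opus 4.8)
The plan is to mimic, almost verbatim, the proof of Lemma~\ref{lemma3}, the only structural change from (\ref{aprha}) to (\ref{caprha}) being that the lower-left entry $\xi=\alpha^2$ has been replaced by $\delta^2$. First I would write the Dieudonn\'{e} intertwining relation (\ref{dieux}) inside the two-dimensional block, $[\widetilde{H}^{[2]}_{(n)}(\delta)]^\dagger\,\widetilde{\Theta}=\widetilde{\Theta}\,\widetilde{H}^{[2]}_{(n)}(\delta)$, and read it as a linear system for the entries of the unknown $2\times 2$ matrix $\widetilde{\Theta}$, which I require --- as any metric must be --- to be Hermitian and positive definite (real diagonal entries $p,s$; off-diagonal entry $q=\widetilde{\Theta}_{12}$).

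Since $E^{(-)}_n$ and $\delta$ are real, $[\widetilde{H}^{[2]}_{(n)}]^\dagger$ is simply the transpose of $\widetilde{H}^{[2]}_{(n)}$, so the algebra is short: the $(1,1)$ component of the relation forces $q$ to be real, both off-diagonal components collapse to the single scalar constraint $p=\delta^2 s$, and the $(2,2)$ component is then automatic. Hence the admissible $\widetilde{\Theta}$ form a two-parameter real family $(q,s)$ subject only to $p=\delta^2 s$ and to positive definiteness, the latter meaning $p>0$ (equivalently $s>0$, as $\delta\neq 0$) together with $\det\widetilde{\Theta}=\delta^2 s^2-q^2>0$, i.e. $|q|<|\delta|\,s$. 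Fixing the residual overall positive-scale freedom by normalizing the lower-right entry to $1/|\delta|$ --- whence $p=|\delta|$, which is the form recorded in (\ref{ume22}) --- the determinant inequality reduces to $|q|<1$, and renaming $q=c_n$ yields the asserted one-parameter family with $-1<c_n<1$. (For $\alpha<K$ the symbol $\delta$ in (\ref{ume22}) is to be read as $|\delta|$, consistently with $\widetilde{H}^{[2]}_{(n)}$ depending on $\delta$ only through $\delta^2$.)

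I anticipate no genuine obstacle: the whole statement is elementary $2\times 2$ linear algebra. The only bookkeeping worth a second glance is (i) checking that the two off-diagonal equations are consistent, collapsing to a single relation rather than two, and (ii) confirming that, after the normalization of (\ref{ume22}), the positive-definiteness cone is faithfully parametrized precisely by $-1<c_n<1$; both are immediate. I would also record, in parallel with the remarks that follow Lemma~\ref{lemma3}, that the construction is entirely local in the label $n$, so that the block-diagonal direct sum $\bigoplus_{n=0}^{\infty}\widetilde{\Theta}^{[2]}_{(n)}(\delta,c_n)$ is automatically an admissible metric for the full leading-order tilded Hamiltonian (\ref{dr13}) --- the natural stepping stone toward an infinite-dimensional statement mirroring Theorem~\ref{thm4}.
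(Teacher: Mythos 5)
Your proposal is correct and follows essentially the same route as the paper, which simply declares the construction ``analogous'' to that of Lemma~\ref{lemma3} (itself resolved by ``an easy algebra'' on the Dieudonn\'{e} condition); you merely carry out that algebra explicitly. Your parenthetical observation that $\delta$ in (\ref{ume22}) must be read as $|\delta|$ when $\alpha<K$ (since positive definiteness requires a positive diagonal) is a useful clarification the paper leaves implicit.
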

\begin{proof}
The construction
is analogous to the one described in
the proof of
Lemma \ref{lemma3}.
\end{proof}

\begin{thm}\label{thm7}
Tilded Hamiltonian
$\widetilde{\mathfrak{H}}^{(K)}_0(\alpha)$
of Eq.~(\ref{dr13})
is Hermitian in any tilded physical Hilbert space
$\widetilde{\cal H}$
of Eq.~(\ref{ufdenes})
characterized by the
metric
 \be
 \widetilde{\Theta}=\bigoplus_{n=0}^\infty\,
 \widetilde{\Theta}^{[2]}_{(n)}[\delta(\alpha),c_n)\,
 \label{dm14}
 \ee
where all of the parameters
$c_n \in (-1,1)$ are variable.
\end{thm}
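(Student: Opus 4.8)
The plan is to reduce the infinite-dimensional self-adjointness requirement to the two-by-two blocks already treated in Lemma~\ref{lemma3b}, in exact parallel with the way Theorem~\ref{thm4} reduces to Lemma~\ref{lemma3}. By definition, Hermiticity of $\widetilde{\mathfrak{H}}^{(K)}_0(\alpha)$ in $\widetilde{\cal H}$ means that this operator obeys the Dieudonn\'e relation (\ref{dieux}), $[\widetilde{\mathfrak{H}}^{(K)}_0(\alpha)]^\dagger\,\widetilde{\Theta}=\widetilde{\Theta}\,\widetilde{\mathfrak{H}}^{(K)}_0(\alpha)$, for some $\widetilde{\Theta}=\widetilde{\Theta}^\dagger>0$. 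First I would note that both $\widetilde{\mathfrak{H}}^{(K)}_0(\alpha)=\bigoplus_n \widetilde{H}^{[2]}_{(n)}(\delta)$ of (\ref{dr13}) and the candidate metric $\widetilde{\Theta}=\bigoplus_n \widetilde{\Theta}^{[2]}_{(n)}[\delta(\alpha),c_n)$ of (\ref{dm14}) are block-diagonal with respect to the same orthogonal decomposition $\widetilde{\cal K}=\bigoplus_n\widetilde{\cal K}^{[2]}_{(n)}$ of (\ref{ufdenes}). Consequently both sides of (\ref{dieux}) are block-diagonal as well, and the single operator identity is equivalent to the countable family of two-by-two identities $[\widetilde{H}^{[2]}_{(n)}(\delta)]^\dagger\,\widetilde{\Theta}^{[2]}_{(n)}=\widetilde{\Theta}^{[2]}_{(n)}\,\widetilde{H}^{[2]}_{(n)}(\delta)$, one for each $n=0,1,\ldots$.

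Next I would simply invoke Lemma~\ref{lemma3b}, whose content is precisely that the $n$-th such identity, together with the constraints $\widetilde{\Theta}^{[2]}_{(n)}=[\widetilde{\Theta}^{[2]}_{(n)}]^\dagger>0$, is solved exactly by the one-parameter matrix (\ref{ume22}) with $-1<c_n<1$ and $\delta=\alpha-K\neq 0$ small. Reassembling the blocks reproduces (\ref{dm14}); Hermiticity and positive definiteness of each summand immediately give Hermiticity and positive definiteness of the direct sum, so $\widetilde{\Theta}$ qualifies as a metric and $\widetilde{\mathfrak{H}}^{(K)}_0(\alpha)$ is self-adjoint in the associated $\widetilde{\cal H}$. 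Since the parameter $c_n$ enters the $n$-th block and only that block, the whole sequence $\{c_n\}\subset(-1,1)$ can indeed be chosen independently, which establishes the last clause of the statement.

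The only points that call for a little care — and which I would flag as the (rather modest) obstacle — are two. One is the sign of $\delta=\alpha-K$ in the diagonal entries of (\ref{ume22}): since $\delta$ changes sign across the exceptional point, positive definiteness of $\widetilde{\Theta}^{[2]}_{(n)}$ (by Sylvester's criterion, positive diagonal entries plus $\det=1-c_n^2>0$) forces one either to restrict to $\delta>0$ or, for $\delta<0$, to take the equally admissible solution $-\widetilde{\Theta}^{[2]}_{(n)}$ of the Dieudonn\'e relation, which is insensitive to an overall sign; equivalently one may read $\delta$ in (\ref{ume22}) as $|\delta|$, which is consistent with the fact that only $\delta^2$ actually occurs in the Hamiltonian block (\ref{caprha}). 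The other is the routine functional-analytic remark that $\bigoplus_n\widetilde{\Theta}^{[2]}_{(n)}$ must be a bounded, boundedly invertible operator on $\widetilde{\cal K}$; at any fixed $\alpha$ this is automatic, because the norms of the blocks and of their inverses are controlled uniformly in $n$ by $|\delta|$, $1/|\delta|$ and $|c_n|<1$, so no extra condition on $\{c_n\}$ is needed. Everything remaining is the elementary $2\times 2$ linear algebra already carried out in the proof of Lemma~\ref{lemma3}.
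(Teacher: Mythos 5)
Your proposal is correct and follows essentially the same route as the paper: the paper's own proof simply observes that the block-diagonality of $\widetilde{\mathfrak{H}}^{(K)}_0(\alpha)$ and of $\widetilde{\Theta}$ reduces the Dieudonn\'e condition (\ref{dieux}) to the two-by-two blocks settled by Lemma~\ref{lemma3b}, after discarding the non-degenerate $K$-plet with its trivial metric. Your two flagged caveats (the sign of $\delta$ needed for positivity of (\ref{ume22}), and the bounded invertibility of the direct sum, which in fact requires $\sup_n|c_n|<1$ rather than merely $|c_n|<1$) are points the paper passes over in silence, so they are a useful addition rather than a deviation.
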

\begin{proof}
In comparison with Theorem \ref{thm4}
the only modification of the proof is that
now we
ignore the low-lying bound-state $K-$plets as
controlled by trivial
metric $\Theta^{[K]}_{ND}=I$.
Thus, at $K>0$ the proof remains analogous while
paying attention just to the
``tilded'' Hilbert-space subspaces.
\end{proof}

 \noindent
In the light of the closeness of parallels
between the $K=0$ and $K > 0$ EP-related scenarios
we leave the
last-step $K>0$ upgrade of the construction of
the admissible classes of
observables (\ref{ne22}) to interested readers.
For compensation let us add here that in general, the
variable parameters in metric (\ref{dm14})
may be chosen $\alpha-$dependent, $c_n=c_n(\alpha)$.
Fortunately, in the light of an appropriate
upgrade of Lemma \ref{lemma5} it is clear that
this would only imply an inessential
modification of the physics described by the model.

\section*{Discussion}

The core of our present message may be seen in the not quite
expected fact that the spiked and nonhermitian harmonic-oscillator
Hamiltonian (\ref{Halfa}) offers a truly exceptional sample of a
consequent application of the Bender-inspired, ${\cal
PT}-$symmetry-based reformulation (we called it
``three-Hilbert-space formulation'') of quantum mechanics of unitary
systems. In this sense and in the light of the above-mentioned
serious mathematical difficulties encountered during the study of
the (non-spiked) imaginary cubic anharmonic oscillator the present,
exactly solvable model could be assigned an important role of a new
benchmark in the ${\cal PT}-$symmetric quantum theory.

In the language of mathematics such a upgrade of the status
of the spiked
harmonic-oscillator model can be perceived as a consequence of the existence of
closed formulae (\ref{me22}) + (\ref{m14}) and  (\ref{ume22}) + (\ref{dm14}).
Near an arbitrary exceptional point of the system
they describe the {\em complete\,} set
of the metric operators and, hence, they determine
{\em all\,} of the
eligible, Hamiltonian-dependent
physical
Hilbert spaces ${\cal H}={\cal H}(H)$.
This is precisely the situation in which one
encounters an unrestricted possibility
of an exhaustive ``numbering'' of the
physical Hilbert spaces ${\cal H}(H)$
by the sets
of
parameters $\{b_n\}$ or $\{c_n\}$.

From the perspective of physics
one can speak about the {\em complete} menu
of the unitary quantum systems
possessing the standard probabilistic interpretation
and
compatible with
our preselected Hamiltonian $H$.
In a way explained in \cite{Geyer} (cf. also \cite{Carl,arabky}
for some elementary illustrative examples),
the ambiguity of this menu of spaces
can subsequently be restricted (or even completely suppressed)
by means of taking
some other candidates $\Lambda$ for the observables
into consideration.

In an opposite direction,
in a way proposed in \cite{Lotor})
one could also
recall the menu of metrics,
pick up one of them,
$\Theta_0$, and demand that any element $\Lambda_0$
of the class of eligible observables
must satisfy the constraint
 \be
 \Lambda_0^\dagger\,\Theta_0
 =\Theta_0\,\Lambda_0\,.
 \ee
This just means that the knowledge of the complete menu of metrics
enables us to keep also all of the admissible operators
representing
observables
self-adjoint in
our preselected Hilbert space ${\cal H}_0$.

In the latter context let us emphasize,
last but not least, that the full generality of
our present results concerning the special
spiked-oscillator model opens also a truly remarkable
possibility
of studying an
interplay between
the influence of changes of the separate
parameters. This could make the phenomenological
interpretation of our apparently not too complicated
model extremely flexible.
{\it Pars pro toto\,} let us mention
that one of the consequences
of this flexibility
(with a detailed analysis already lying far
beyond the scope of our present short paper)
might be sought near the
reality-of-levels preserving unavoided-level-crossing
interfaces with $\alpha^{(EP)}=1,2, \ldots$.
Indeed,
at these boundary points we could admit
{\em  discontinuous\,}
jumps in the matrix elements
$c_n=c_n(\delta)$ of the metric at $\delta=0$.
Every such a jump
(reflecting the ``punched'', two-sided
nature of the theoretically admissible
diagonalizable-Hamiltonian vicinity of the EPs)
would have to be interpreted
as a genuine quantum
catastrophe {\it alias},
in the terminology of Refs.~\cite{Denis,Denisb},
a quantum phase transition
of the second kind.


\section*{Acknowledgments}

The author acknowledges the support by the
 Faculty of Science of the University of Hradec
Kr\'{a}lov\'{e} and, in particular, by the
Excellence project 2212 P\v{r}F UHK 2020.


\section*{Data Availability}

 No datasets were generated or analysed during the current study.

\section*{Author Contribution statement}

The author is the single author of the paper.

\section*{Competing interests}

The author declares no competing interests.


\end{document}